\titleformat{\section}{\normalfont\large\bfseries}{\thesection}{1em}{} 
\titleformat{\subsection}{\normalfont\normalsize\bfseries}{\thesubsection}{1em}{} 
\titleformat{\subsubsection}{\normalfont\small\bfseries}{\thesubsubsection}{1em}{} 
\newcommand{\myldots}{\kern-0.05em.\kern-0.01em.\kern-0.01em.\kern0.05em}
\newcommand{\R}{\mathbb{R}}
\newcommand{\I}{\mathbb{I}}
\newcommand{\0}{\mathbf{0}}
\newcommand{\norm}[1]{\left\lVert#1\right\rVert}
\newcommand{\fe}{\mathrm{f}}
\newcommand{\ve}{\mathrm{v}}
\newcommand{\ee}{\mathrm{e}}
\newcommand{\Ve}{\mathrm{V}}
\newcommand{\Ze}{\mathrm{Z}}
\newcommand{\QED}{\hfill$\blacksquare$}
\newcommand\scalemath[2]{\scalebox{#1}{\mbox{\ensuremath{\displaystyle #2}}}}
\newtheorem{theorem}{Theorem}
\newtheorem{corollary}{Corollary}
\newtheorem{proposition}{Proposition}
\newtheorem{assumption}{Assumption}
\newtheorem{definition}{Definition}
\newcommand{\email}[1]{\href{mailto:#1}{#1}}
\title{\LARGE \bf Configuration-Constrained Tube MPC for Periodic Operation}
\author{Filippo Badalamenti, Jose A. Borja-Conde, Sampath Kumar Mulagaleti, Boris Houska,\\Alberto Bemporad, Mario Eduardo Villanueva\thanks{F.~Badalamenti (Corresponding Author, \email{filippo.badalamenti@imtlucca.it}), S.K.~Mulagaleti, M.E.~Villanueva and A.~Bemporad are with IMT School for Advanced Studies Lucca, Piazza San Francesco 19, 55100, Italy. J.A.~Borja-Conde is with University of Seville, Spain. B.~Houska is with ShanghaiTech University, China.}}
\begin{document}

\date{}
\maketitle
\thispagestyle{empty}
\pagestyle{empty}

\begin{abstract}
Periodic operation often emerges as the economically optimal mode in industrial processes, particularly under varying economic or environmental conditions. This paper proposes a robust model predictive control (MPC) framework for uncertain systems modeled as polytopic linear differential inclusions (LDIs), where the dynamics evolve as convex combinations of finitely many affine control systems with additive disturbances. The robust control problem is reformulated as a convex optimization program by optimizing over configuration-constrained polytopic tubes and tracks a periodic trajectory that is optimal for a given economic criterion. Artificial variables embedded in the formulation ensure recursive feasibility and robust constraint satisfaction when the economic criterion is updated online, while guaranteeing convergence to the corresponding optimal periodic tube when the criterion remains constant. To improve computational efficiency, we introduce a quadratic over-approximation of the periodic cost under a Lipschitz continuity assumption, yielding a Quadratic Program (QP) formulation that preserves the above theoretical guarantees. The effectiveness and scalability of the approach are demonstrated on a benchmark example and a ball-plate system with eight states.
\end{abstract}

\section{Introduction} \label{sec:Sect1}
Model Predictive Control (MPC) is widely adopted for constrained multivariable systems thanks to its ability to optimize performance online while enforcing state and input constraints~\cite{MAYNE2000}. In many processes, periodic operation often emerges as the economically optimal mode~\cite{lee2001model}, with the notable exception of linear time-invariant (LTI) systems with convex stage costs, where the optimal operation reduces to a unique steady state~\cite{HOUSKA201779}. Such periodicity typically arises from exogenous factors—e.g., cyclic demands, time-varying prices—or from the nonlinear nature of underlying dynamics~\cite{bittanti2009periodic}.

Economic MPC (E‑MPC)~\cite{ellis2017economic} formalizes this by embedding an economic stage cost that optimizes transient behavior and asymptotic performance. Under strict dissipativity, E‑MPC guarantees convergence to an economically optimal equilibrium in the time‑invariant case~\cite{angeli2011average}, and extensions to periodic operation exist for nonlinear systems~\cite{zanon_7549077}. Furthermore, including an artificial periodic trajectory as an online decision variable preserves recursive feasibility when the optimal periodic orbit changes, and—under suitable constraints on this trajectory and explicit terminal ingredients—ensures closed‑loop performance and stability~\cite{KOHLER2020185}.

An alternative is to track a periodic solution that is optimal for a given economic criterion~\cite{wurth2011two,LIMON_Periodic_7172461}. Two-layer architectures~\cite{wurth2011two} compute this solution via an upper-level dynamic real-time optimizer (DRTO) and track it with a lower-level MPC~\cite{krupa2020implementation}; such schemes achieve local equivalence with E-MPC~\cite{zanon2017periodic}, but their performance may degrade under abrupt parameter changes or model mismatch, potentially compromising stability. Single-layer formulations integrate these tasks into one optimization problem~\cite{zanin2002integrating,LIMON2012490}, ensuring recursive feasibility and convergence to the best admissible periodic operation~\cite{singleLayerPeriodicMPC}. However, these formulations introduce new challenges: online updates of the economic criterion increase problem dimensionality, and the associated cost is often non-quadratic, which precludes a Quadratic Program (QP) structure for linear systems. To address this, first-order Taylor approximations of the economic objective~\cite{empc_periodicApprox} have been proposed, recovering a QP structure at the expense of linear convergence to the economic optimum—a strategy proven practical in industrial applications~\cite{borja2024efficient}.

Uncertainty introduces additional challenges in periodic MPC design~\cite{zanon_7549077}. Disturbances, unmodeled dynamics, and parameter variations prevent the system from following a single periodic trajectory; instead, one can only guarantee that the state remains within a bounded region. Consequently, the control objective shifts from tracking a trajectory to tracking a \emph{periodic sequence of sets} that is robustly forward invariant under all admissible uncertainties. Tube-based MPC (TMPC) addresses this by optimizing over Robust Forward Invariant Tubes (RFITs) that contain all possible trajectories under disturbances. Among polyhedral tube parameterizations—see~\cite{HOUSKA2025100992} for a recent survey—rigid~\cite{MAYNE_RakovicRigidTMPC2005219,Implicit_rakovic2023}, homothetic~\cite{RAKOVIC_HomoTMPC20121631,Langson2004}, and elastic~\cite{ETMPC_7525471,Fleming2015} tubes have been widely studied, each trading off flexibility and computational complexity. More recently, configuration-constrained (CC) polytopic tubes~\cite{CCTMPC_VILLANUEVA2024111543} have emerged as a flexible alternative, enabling vertex-based invariance checks that are both necessary and sufficient. Attempts to incorporate periodicity under uncertainty have relied on ellipsoidal~\cite{VILLANUEVA_ellipsoids2017311}, rigid~\cite{wang2019robust}, or homothetic~\cite{dong2020homothetic} tubes, but these approaches either lack flexibility or incur high computational cost.

\subsubsection*{Contribution}
This work extends configuration-constrained tube MPC (CCTMPC) from tracking piecewise-constant references~\cite{CCTMPC_Tracking_10546969} to robust tracking of \emph{periodic tubes} that are optimal for a given economic criterion. The uncertain system is modeled as a polytopic LDI, which naturally captures both additive disturbances and parametric uncertainty; this structure is directly handled by configuration-constrained tubes, offering greater flexibility compared to other polytopic parameterizations. Stability and convergence for constant economic criteria are ensured without requiring an explicit terminal region, thereby simplifying controller design; see Theorem~\ref{thm:main_stability} and Corollary~\ref{col:feasibility_stability_1}. Furthermore, under suitable regularity assumptions on the cost, we introduce a quadratic upper approximation inspired by~\cite{empc_periodicApprox}, enabling a QP formulation that preserves the theoretical guarantees of the general convex formulation while significantly improving computational efficiency. The proposed approach is validated on a benchmark example and a ball–plate system with eight states, demonstrating its effectiveness and scalability.

The remainder of the paper is structured as follows. Section~\ref{sec:Sect2} introduces robust optimal periodic tubes and formulates the uncertain periodic operation problem. Section~\ref{sec:Sect3} details the proposed periodic CCTMPC scheme and analyzes its feasibility, robustness, and stability properties. Section~\ref{sec:Sect4} presents the QP approximation and proves convergence to the same optimal periodic tube as the original non-approximated problem. Section~\ref{sec:Sect5} reports numerical results, and Section~\ref{sec:Sect6} concludes the paper with final remarks and future research directions.

\subsubsection*{Notation}
The symbol $\I$ denotes the identity matrix. For vectors $a$ and $b$, we write $(a,b)=[a^{\top}\ b^{\top}]^{\top}$. 
A block-diagonal matrix with blocks $Q_{1},\myldots,Q_{n}$ is denoted by $\mathrm{blkd}(Q_{1},\myldots,Q_{n})$. 
The operator $\operatorname{convh}$ denotes the convex hull, and we define the weighted squared norm as $\norm{z}_Q^2 \coloneqq z^{\top}Qz$, where $Q=Q^\top\succ0$. We reserve $t$ for global time, $k$ for the prediction-horizon index, and $j$ for the periodic-orbit index, with the shorthand $k \equiv t+k$ and $j \equiv t+j$ when no ambiguity arises.
 
\section{Problem Formulation} \label{sec:Sect2}
This section formalizes the robust periodic MPC problem for uncertain systems represented as polytopic LDIs~\cite{Boyd1994_LDI_Chapter}. We introduce the system model, define periodic forward invariant tubes, and describe their configuration‑constrained parameterization, leading to the formulation of the optimal periodic tube problem.

\subsection{Uncertain System Model}
We consider uncertain dynamics represented as a polytopic LDI
\begin{equation}\label{eq:system}
    x^+ \in \{ A x + B u + w \;|\; (A,B) \in \Delta,\; w \in \mathcal{W} \},
\end{equation}
where $x \in \R^{n_x}$ and $u \in \R^{n_u}$ denote the state and input, and $w \in \mathcal{W} \subset \R^{n_x}$ is an additive disturbance belonging to a compact convex set. The uncertainty set $\Delta$ is defined as
\[
\Delta \coloneqq \operatorname{convh}\big(\{(A_1,B_1),\myldots,(A_m,B_m)\}\big),
\]
so that $(A,B)$ vary within the convex hull of $m$ vertex pairs, each defining a system-input model. Inclusion~\eqref{eq:system} captures both additive disturbances and parametric uncertainty in a unified framework, and reduces to a LTI system subject to additive disturbances when $m = 1$. The system is subject to closed convex constraints $x \in \mathcal{X}$ and $u \in \mathcal{U}$.

To ensure robust operation under all possible realizations of the uncertainty, we adopt a tube‑based MPC approach, in which the controller propagates sets of reachable states (tubes) rather than individual trajectories. The following subsections introduce the concepts required to synthesize such controllers.

\subsection{Periodic Forward Invariant Tubes}
Tube-based MPC schemes relies on the construction of Robust Forward Invariant Tubes (RFITs). To formalize this notion, define
\begin{align*}
\mathcal{F}(X):= \left\{ X^+ \subseteq \R^{n_x}
\ \middle| \ \,
\begin{aligned} &\forall x \in X, \ \exists u \in \mathcal{U} : \vspace{1pt} \\
&\forall (A,B)\in\Delta, \ \forall w \in \mathcal{W}, \\
&Ax+Bu+w \in X^+ \vspace{1pt}
\end{aligned} 
\right\}
\end{align*}
as the set of all one-step successor sets of a given set $X \subseteq \R^{n_x}$.
\begin{definition}
\label{defn:RFIT}
A sequence of sets $(X_t \subseteq \R^{n_x})_{t \in \mathbb{N}}$ is an RFIT if $X_{t+1} \in \mathcal{F}(X_{t})$ holds for all $t \in \mathbb{N}$.
\end{definition}

Since our control objective involves periodic operation, we focus on the following subclass of RFITs.
\begin{definition}
\label{defn:PFIT}
An RFIT $(X_t)_{t \in \mathbb{N}}$ is called a 
\emph{Periodic Forward Invariant Tube (PFIT)} 
if there exists a period $T \in \mathbb{N}_{+}$ such that
\begin{align}
\label{eq:setPeriodicityConstr}
X_{t+T} = X_t, \quad \forall t \in \mathbb{N}.
\end{align}
\end{definition}
We remark that a PFIT with period $T=1$ recovers the standard notion for a Robust Control Invariant (RCI) set, since $X_{t+1}=X_t$ implies $X_t \in \mathcal{F}(X_t)$ for all $t$.

\subsection{Configuration-Constrained Polytopic Tubes}
To obtain a tractable representation of PFITs, we adopt the configuration-constrained polyhedral representation introduced in~\cite{CCTMPC_VILLANUEVA2024111543}. A polytope is represented as 
\begin{align}
\label{eq:PolytopicRFIT}
X(y) \coloneqq \{ x \in \mathbb{R}^{n_x} \mid Fx \leq y \},
\end{align}
where $F \in \mathbb{R}^{\fe \times n_x}$ is a fixed matrix of facet normals and $y \in \mathbb{R}^{\fe}$ is a parameter vector. We assume that $Fx \leq 0 \implies x = 0$, which guarantees that $X(y)$ is bounded (and possibly empty) for any $y$.

To preserve the polytope’s combinatorial structure, $y$ is restricted to the \emph{configuration cone}
\[
\mathcal{E} \coloneqq \{ y \in \mathbb{R}^{\fe} \mid Ey \leq 0 \},
\]
where $E$ encodes linear inequalities ensuring that the adjacency relations between facets remain invariant for all $y \in \mathcal{E}$. Under this condition, there exists a set of matrices $V := \{ V_j \}_{j=1}^{\ve}$ such that
\begin{align}
\label{eq:vertex_config}
X(y) = \mathrm{convh}\big( \{ V_j y \mid j = 1,\myldots,\ve \} \big), \quad \forall y \in \mathcal{E}.
\end{align}
The triple $(F,E,V)$ is called a \emph{configuration triple}. Procedures for constructing such triples from a given polytope are detailed in~\cite[Sec.~II.A]{houska2024stabilizing}. This structure enables efficient halfspace/vertex-based invariance conditions, which are both necessary and sufficient for robust forward invariance~\cite[Prop.~2]{houska2024stabilizing}.

We define $U_j = \I_{n_u} \otimes e_\ve(j)^\top$, stacking the vertex control inputs as $u :=(u_1, \myldots, u_{\ve})$, such that $u_j = U_j u$. Then, we introduce the set $\mathbb{S} \subseteq \mathbb{R}^{\fe}\times \mathbb{R}^{\ve \cdot n_u}\times\mathbb{R}^{\fe}$ defined as  
\begin{align}
\label{eq:S_definition}
\scalemath{0.95}{
\mathbb{S}\coloneqq
\left\{ (y,u,y^+) \ \middle| \ 
\begin{aligned}
&\forall \,(i,j) \in \{1,\myldots,m\} \times \{1,\myldots,\ve\}, \vspace{1pt}\\
& F(A_i V_j y + B_i U_j u) + d \leq y^+, \vspace{1pt}\\
& Ey \leq 0, \ V_j y \in \mathcal{X}, \ U_j u \in \mathcal{U} 
\end{aligned} 
\right\},}
\end{align}
where $d_k\coloneqq\max\{F_k w : w\in \mathcal{W}\}$ for all $k \in \{1, \myldots, \mathsf{f} \}$. 
\begin{proposition}
\label{prop:RFIT_CC}
   Let $(y_t)_{t \in \mathbb{N}}$ be a sequence such that for all $t \in \mathbb{N}$ there exists $u_t \in \mathbb{R}^{\ve \cdot n_u}$ with $(y_t,u_t,y_{t+1}) \in \mathbb{S}$. Then, $(X(y_t))_{t \in \mathbb{N}}$ is an RFIT.
\end{proposition}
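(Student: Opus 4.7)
The plan is to unpack Definition~\ref{defn:RFIT}: I need to show that for every $t \in \mathbb{N}$ and every $x \in X(y_t)$, there exists an admissible input $u \in \mathcal{U}$ such that $Ax + Bu + w \in X(y_{t+1})$ for all $(A,B) \in \Delta$ and all $w \in \mathcal{W}$. Equivalently, using the halfspace description \eqref{eq:PolytopicRFIT} and the definition of $d$, it suffices to prove $F(Ax + Bu + w) \leq y_{t+1}$ componentwise for all such $(A,B)$ and $w$.

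The construction is driven by the vertex representation. Since the triple $(y_t, u_t, y_{t+1}) \in \mathbb{S}$ satisfies $E y_t \leq 0$, the identity \eqref{eq:vertex_config} applies, so every $x \in X(y_t)$ can be written as $x = \sum_{j=1}^{\ve} \lambda_j V_j y_t$ for some $\lambda \geq 0$ with $\sum_j \lambda_j = 1$. I then define the candidate feedback
\begin{equation*}
u \;\coloneqq\; \sum_{j=1}^{\ve} \lambda_j U_j u_t,
\end{equation*}
which lies in $\mathcal{U}$ by convexity of $\mathcal{U}$ together with the vertex-wise admissibility $U_j u_t \in \mathcal{U}$ encoded in $\mathbb{S}$.

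The core computation then exploits the two convex combinations at once. Writing an arbitrary $(A,B) \in \Delta$ as $(A,B) = \sum_{i=1}^{m} \mu_i (A_i, B_i)$ with $\mu \geq 0$ and $\sum_i \mu_i = 1$, bilinearity yields
\begin{equation*}
Ax + Bu \;=\; \sum_{i=1}^{m}\sum_{j=1}^{\ve} \mu_i \lambda_j \bigl( A_i V_j y_t + B_i U_j u_t \bigr).
\end{equation*}
Applying $F$, using $Fw \leq d$ (componentwise, by definition of $d$), and invoking the inequality $F(A_i V_j y_t + B_i U_j u_t) + d \leq y_{t+1}$ from \eqref{eq:S_definition} for every index pair $(i,j)$, I obtain
\begin{equation*}
F(Ax+Bu+w) \;\leq\; \sum_{i,j} \mu_i \lambda_j \bigl(y_{t+1} - d\bigr) + d \;=\; y_{t+1},
\end{equation*}
so that $Ax + Bu + w \in X(y_{t+1})$, which is exactly $X(y_{t+1}) \in \mathcal{F}(X(y_t))$.

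Conceptually this is a routine vertex-convexity argument, so I expect no real obstacle. The main point requiring a bit of care is bookkeeping: one must simultaneously convex-combine over the state-vertex index $j$ and the model-vertex index $i$, and rely on the fact that the inequality in $\mathbb{S}$ is enforced for \emph{every} pair $(i,j)$ and with the \emph{tight} disturbance bound $d$ (so that the leftover $+d$ after summation cancels cleanly). Once this is verified, the conclusion follows for all $t \in \mathbb{N}$ and $(X(y_t))_{t \in \mathbb{N}}$ is an RFIT.
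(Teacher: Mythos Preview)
Your argument is correct: writing $x$ as a convex combination of the vertices $V_j y_t$ (licensed by $Ey_t\le 0$ and~\eqref{eq:vertex_config}), choosing $u=\sum_j\lambda_j U_j u_t$, and then convex-combining simultaneously over the model index $i$ and the vertex index $j$ gives exactly $F(Ax+Bu+w)\le y_{t+1}$, with the tight disturbance bound $d$ cancelling as you note.

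The paper itself does not write this out; it simply defers to~\cite[Corollary~4]{CCTMPC_VILLANUEVA2024111543}. Your derivation is precisely the elementary vertex-convexity argument that underlies that cited result, so the two are the same in substance. The difference is only in presentation: you make the proof self-contained, which spares the reader a trip to the reference, while the paper keeps the exposition compact by citing.
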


\textit{Proof:} The result follows from~\cite[Corollary 4]{CCTMPC_VILLANUEVA2024111543}.
\hfill \QED

\subsection{Optimal Configuration-Constrained PFITs}
\label{subsec:optPerTraj}
We consider the problem of computing a PFIT that is admissible under uncertainty and optimal with respect to a $T$-periodic cost. Let
\[
\mathbf{z} \coloneqq (z_0,\dots,z_T), \qquad
\mathbf{v} \coloneqq (v_0,\dots,v_{T-1}),
\]
collect the tube parameters and the stacked vertex inputs over one period. The objective is defined as a sum of $T$ stage costs indexed via modular arithmetic~\cite{Zanon2013},
\begin{align}
\label{eq:cyclical_function}
[t+j] \coloneqq (t+j)\bmod T, \qquad \forall\, t,j \in \mathbb{N},
\end{align}
and the time-shifted periodic cost is
\begin{align}
\label{eq:time_varying_econ}
M_t(\mathbf{z},\mathbf{v}) := \sum_{j=0}^{T-1} \ell_{[t+j]}(z_j,v_j),
\end{align}
where $t$ determines the modular shift in the stage cost sequence. Each stage cost $\ell_{[t+j]}$ may depend on external signals such as economic weights or reference trajectories.

The \emph{optimal PFIT} for the cost structure at current time $t$ is obtained by solving
\begin{align}
\label{eq:DRTO_problem}
(\mathbf{z}^\circ(t),\mathbf{v}^\circ(t))
\in \arg\min_{\mathbf{z},\mathbf{v}} \ & M_t(\mathbf{z},\mathbf{v}) \\
\text{s.t.} \quad
&(z_j,v_j,z_{j+1}) \in \mathbb{S},\quad z_T = z_0,\nonumber\\
& j \in \{0,\myldots,T\!-\!1\}. \nonumber
\end{align}
The equality constraint $z_T = z_0$ enforces the periodicity of the tube cross-sections as in~\eqref{eq:setPeriodicityConstr}; equivalently, periodicity can be verified by $V_j z_0 = V_j z_T$ for all $j \in \{1,\myldots,\ve\}$ if polytopes share the same configuration triple. Note that the feasibility set does not depend on $t$; only the objective changes through the modular shift. 
\begin{assumption}
\label{ass:convex}
Problem~\eqref{eq:DRTO_problem} has a nonempty feasible set and a strictly convex objective.
\end{assumption}
For any feasible sequence $(\mathbf{z},\mathbf{v})$, define its \emph{cyclic successor} as
\begin{align}
\label{eq:cyclic_successor}
\mathbf{z}^+ \coloneqq (z_1,\dots,z_T,z_1), \qquad
\mathbf{v}^+ \coloneqq (v_1,\dots,v_{T-1},v_0).
\end{align}    
\begin{proposition}
\label{prop:fixed_cost}
Suppose Assumption~\ref{ass:convex} holds. Then:
\begin{enumerate}
\item The optimal value of Problem~\eqref{eq:DRTO_problem} is independent of $t$ and is denoted by 
\[
\mathcal{O}^\circ := \min_{\mathbf{z},\mathbf{v}} M_t(\mathbf{z},\mathbf{v}), \qquad \forall t\in \mathbb{N}.
\]
\item For any $t \ge 0$, if $(\mathbf{z}^\circ(t),\mathbf{v}^\circ(t))$ is the unique minimizer at time $t$, then the minimizer at time $t+1$ is its cyclic successor
\[
(\mathbf{z}^\circ(t+1),\mathbf{v}^\circ(t+1)) = (\mathbf{z}^\circ(t)^+,\mathbf{v}^\circ(t)^+).
\]
\end{enumerate}
\end{proposition}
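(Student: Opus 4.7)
Both parts rest on a single structural observation: the cyclic-successor map $(\mathbf{z},\mathbf{v}) \mapsto (\mathbf{z}^+,\mathbf{v}^+)$ defined in~\eqref{eq:cyclic_successor} is a bijection of the (common, $t$-independent) feasible set of Problem~\eqref{eq:DRTO_problem} onto itself that intertwines $M_t$ with $M_{t+1}$, in the sense that $M_{t+1}(\mathbf{z}^+,\mathbf{v}^+) = M_t(\mathbf{z},\mathbf{v})$. The strategy is therefore to verify feasibility preservation and cost intertwining, and then to align the optimization problems at consecutive times through this bijection.

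To check feasibility, note that the inclusion $(z^+_j,v^+_j,z^+_{j+1}) \in \mathbb{S}$ is an immediate index shift of the original inclusion for $j=0,\dots,T-2$; the wrap-around case $j=T-1$ yields the triple $(z_T,v_0,z_1)$, which coincides with the original $j=0$ inclusion once the periodicity $z_T=z_0$ is substituted, while periodicity of the successor follows from $z^+_T=z_1=z^+_0$. For the cost identity, I would reindex by $k=(j+1) \bmod T$ in $M_{t+1}(\mathbf{z}^+,\mathbf{v}^+) = \sum_{j=0}^{T-1} \ell_{[t+1+j]}(z^+_j,v^+_j)$ and use $[t+T]=[t]$ together with $z_T=z_0$ for the wrap-around term to obtain $M_{t+1}(\mathbf{z}^+,\mathbf{v}^+) = M_t(\mathbf{z},\mathbf{v})$. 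Since the map acts as a cyclic shift by one on $T$-periodic sequences, it is clearly invertible (its inverse is the cyclic predecessor, which preserves feasibility and cost by the same argument).

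Taking infima across this bijection immediately yields $\min M_{t+1} = \min M_t$, and iterating in $t$ establishes part~1. For part~2, evaluating at the unique minimizer $(\mathbf{z}^\circ(t),\mathbf{v}^\circ(t))$ shows that $(\mathbf{z}^\circ(t)^+,\mathbf{v}^\circ(t)^+)$ is feasible and attains $\mathcal{O}^\circ$, hence is a minimizer of~\eqref{eq:DRTO_problem} at time $t+1$; uniqueness at $t$ propagates through the bijection to uniqueness at $t+1$, forcing the two sequences to coincide. The main bookkeeping hurdle is the index juggling between the modular shift $[t+j]$ and the cyclic permutation of decision variables, particularly the wrap-around term at $j=T-1$, which is precisely where the periodicity constraint $z_T=z_0$ plays its essential role.
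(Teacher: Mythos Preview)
Your proposal is correct and follows essentially the same approach as the paper: the paper's proof also observes that the feasible set is $t$-independent and that cyclic shifts of a feasible solution yield the same cost at the next time step, then invokes uniqueness to identify the optimizer at $t+1$ with the cyclic successor. Your version simply makes the index bookkeeping (feasibility preservation, the wrap-around term, and bijectivity of the shift) explicit where the paper leaves it implicit.
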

\begin{proof}
The feasibility set of Problem~\eqref{eq:DRTO_problem} does not depend on $t$. By construction of $M_t$ and the modular indexing in~\eqref{eq:cyclical_function}, cyclic shifts of a feasible solution yield the same cost at the next time step. Uniqueness of the minimizer implies that the optimizer at time $t+1$ must be the cyclic successor of the optimizer at time $t$.
\end{proof}

By Proposition~\ref{prop:fixed_cost}, the closed-loop sequence of sets formed by the first component of the optimizer, $X(z_0^\circ(t))_{t \in \mathbb{N}}$, coincides with the optimal PFIT of period $T$. This motivates our control goal.

\paragraph*{Problem Statement}
Design an MPC scheme that (i) is recursively feasible and robustly satisfies constraints, and (ii) generates an RFIT $(X(y_t))_{t \in \mathbb{N}}$ converging to the PFIT optimal for the given periodic cost $M_t$.

A straightforward approach would be to recompute~\eqref{eq:DRTO_problem} whenever the cost structure changes and then run a tracking MPC for the resulting PFIT. However, such a two-layer design suffers from the drawbacks discussed in the introduction. Instead, following the single-layer MPC philosophy~\cite{LIMON_MPC_for_Track,limon2012model}, the next section proposes a \emph{single convex program} that simultaneously (a) identifies a PFIT candidate consistent with the current periodic cost, and (b) enforces an RFIT to track it, guaranteeing convergence to the optimal PFIT in closed loop whenever the cost remains constant.

\section{CCTMPC for periodic operation} \label{sec:Sect3}
We now present a CCTMPC scheme that addresses our \textit{Problem Statement}. Assuming without loss of generality that $T > N$, we formulate a parametric convex optimization problem solved online at each time step. The objective comprises three terms: (i) construction of a feasible PFIT, (ii) tracking via an RFIT, and (iii) a terminal term ensuring stability with respect to the feasible PFIT. The optimization problem is
\begin{align}
\min_{\mathbf{y},\mathbf{u},\mathbf{z},\mathbf{v}} \;&\ M_t(\mathbf{z},\mathbf{v}) +\sum_{k=0}^{N-1} \norm{\begin{bmatrix} y_k - z_k \\ u_k-v_k \end{bmatrix}}_Q^2 \!+ \Ze(y_N,\mathbf{z},\mathbf{v}) \notag \\
    \text{s.t.} \quad &\ Fx \leq y_0, \quad  y_N \in \mathbb{T}(\mathbf{z},\mathbf{v}), \quad z_T = z_0,\notag\\
    &\  (y_k, u_k, y_{k+1}) \in \mathbb{S},\  \forall \, k \in \{0,\myldots,N\!-\!1\},\label{eq:periodic_MPC}\\
    &\ (z_j, v_j, z_{j+1}) \in \mathbb{S},\  \forall \, j \in \{0,\myldots,T\!-\!1\}. \nonumber 
\end{align}   
which is parametric in the current state $x$ and the periodic cost $M_t$ defining the economic objective at time $t$. The vectors $\mathbf{y}=(y_0,\myldots,y_N)$ and $\mathbf{u}=(u_0,\myldots,u_{N-1})$ parametrize the RFIT, while $\mathbf{z}=(z_0,\myldots,z_T)$ and $\mathbf{v}=(v_0,\myldots,v_{T-1})$ parametrize the PFIT. Furthermore, $\Ze$ denotes the terminal cost and $\mathbb{T}$ the associated terminal region, both defined in terms of the feasible PFIT. Their construction is detailed in the next subsection.

\subsection{Design of the terminal ingredients}
We define the terminal cost $\Ze$ and the induced terminal set $\mathbb T$ as a finite‑horizon approximation of the infinite‑horizon control law. To this end, we introduce the cost‑to‑travel function
\begin{equation}
\label{eq:cost_to_go}
\begin{aligned}
    \Ve(y,y^+,z,v):= \min_u \,&\ \lVert(y-z,u-v)\rVert_Q^2 \\
    \text{s.t.} \;&\ (y,u,y^+) \in \mathbb{S},
\end{aligned}
\end{equation}
for fixed $z$ and $v$, with $\Ve(y,y^+,z,v)=\infty$ if~\eqref{eq:cost_to_go} is infeasible. 

Then, Problem~\eqref{eq:periodic_MPC} is equivalent to
\begin{align}
\min_{\mathbf{y},\mathbf{z},\mathbf{v}}\, &\ M(\mathbf{z},\mathbf{v}\hspace{1pt}|\hspace{1pt}t,p) + \!\sum_{k=0}^{N-1} \Ve(y_k,y_{k+1},z_k,v_k) + \Ze(y_N,\mathbf{z},\mathbf{v}) \notag\\
\text{s.t.}\ &\ Fx \leq y_0, \quad y_N \in \mathbb{T}(\mathbf{z},\mathbf{v}), \quad z_T = z_0, \label{eq:periodic_MPC_2}\\
&\ (z_j, v_j, z_{j+1}) \in \mathbb{S},\  \forall \, j \in \{0,\myldots,T\!-\!1\}. \notag
\end{align}
We also define the terminal cost as
\begin{equation} \label{eq:terminal_cost}
    \begin{aligned}
        \Ze(y,\mathbf{z},\mathbf{v}):=\min_u \,&\ \lVert(y-z_N,u-v_N)\rVert_P^2 \\
        \text{s.t.} \;&\ (y,u,z_{N+1}+\gamma(y-z_N)) \in \mathbb{S},
    \end{aligned}
\end{equation}
where $P\succ0$ and $\gamma\in[0,1)$. We now show that the function $\Ze(y,\mathbf{z},\mathbf{v})$ satisfies the Lyapunov descent condition inside the implicitly defined terminal set
\begin{align*}
    \mathbb{T}(\mathbf{z},\mathbf{v}):=\left\{ y \mid \Ze(y,\mathbf{z},\mathbf{v}) < \infty\right\}.
\end{align*}

\begin{theorem}
\label{thm:main_stability}
    Suppose Assumption~\ref{ass:convex} holds. Let $(\mathbf{z},\mathbf{v})$ satisfy
    \begin{align}
    \label{eq:z_constraints}
    z_0 = z_T \quad \text{and} \quad (z_j, v_j, z_{j+1}) \in \mathbb{S}, \ \forall  j \in \{0,\myldots,T\!-\!1\}.
    \end{align}
    Then, $\Ze(y,\mathbf{z},\mathbf{v})$ is positive definite at $y=z_N$. Moreover, if
    \begin{align}
        \label{eq:gamma_condition}
        \gamma^2 P + Q \preceq P
    \end{align}
    then for any $\bar{y} \in \mathbb{T}(\mathbf{z},\mathbf{v})$ there exists $\bar{y}^+ \in \mathbb{T}(\mathbf{z}^+,\mathbf{v}^+)$ such that
    \begin{align}
    \label{eq:Lyapunov_decease}
    \Ze(\bar{y}^+,\mathbf{z}^+,\mathbf{v}^+) + \Ve(\bar{y},\bar{y}^+,z_N,v_N) \le \Ze(\bar{y},\mathbf{z},\mathbf{v}),    
    \end{align}
    where $(\mathbf{z}^+,\mathbf{v}^+)$ denotes the cyclic successor defined in~\eqref{eq:cyclic_successor}.
\end{theorem}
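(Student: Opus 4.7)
The plan splits naturally into positive definiteness and Lyapunov descent. For the first claim, plugging $u=v_N$ into~\eqref{eq:terminal_cost} at $y=z_N$ is feasible via~\eqref{eq:z_constraints} (since $z_{N+1}+\gamma(z_N-z_N)=z_{N+1}$) and attains zero cost, so $\Ze(z_N,\mathbf{z},\mathbf{v})=0$. For any feasible $y\ne z_N$, the quadratic $\|(y-z_N,u-v_N)\|_P^2$ is strictly positive for every $u$ because $P\succ0$ and the argument has non-zero first block, so the constrained minimum over $u$ is strictly positive.

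For the descent, let $u^*$ realise $\Ze(\bar y,\mathbf{z},\mathbf{v})$ and abbreviate $e_y:=\bar y-z_N$, $e_u:=u^*-v_N$. The candidate successor is the cyclic shift
\begin{equation*}
\bar y^+:=z_{N+1}+\gamma e_y,\qquad u^+:=v_{N+1}+\gamma e_u,
\end{equation*}
for which $(\bar y,u^*,\bar y^+)\in\mathbb{S}$ by construction, giving $\Ve(\bar y,\bar y^+,z_N,v_N)\le\|(e_y,e_u)\|_Q^2$. Provided the shifted triple $(\bar y^+,u^+,z_{N+2}+\gamma(\bar y^+-z_{N+1}))$ lies in $\mathbb{S}$, the admissibility of $u^+$ in~\eqref{eq:terminal_cost} at the shifted reference certifies $\bar y^+\in\mathbb{T}(\mathbf{z}^+,\mathbf{v}^+)$ and yields $\Ze(\bar y^+,\mathbf{z}^+,\mathbf{v}^+)\le\|(\gamma e_y,\gamma e_u)\|_P^2=\gamma^2\Ze(\bar y,\mathbf{z},\mathbf{v})$. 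Summing the two bounds and applying~\eqref{eq:gamma_condition} gives
\begin{equation*}
\Ze(\bar y^+,\mathbf{z}^+,\mathbf{v}^+)+\Ve(\bar y,\bar y^+,z_N,v_N)\le\|(e_y,e_u)\|_{\gamma^2P+Q}^2\le\|(e_y,e_u)\|_P^2=\Ze(\bar y,\mathbf{z},\mathbf{v}),
\end{equation*}
which is exactly the required descent.

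The step I expect to dominate the proof is the feasibility check $(\bar y^+,u^+,z_{N+2}+\gamma(\bar y^+-z_{N+1}))\in\mathbb{S}$. Expanding the vertex inequality shows it reduces to $F(A_iV_je_y+B_iU_je_u)\le\gamma e_y$ for every $(i,j)$, while the feasibility certificate at $(\bar y,u^*)$ only supplies this up to a non-negative residual driven by the periodic-orbit slack $z_{N+1}-F(A_iV_jz_N+B_iU_jv_N)-d$. Closing this gap requires a careful argument that couples the certificate at $(\bar y,u^*)$ with the two consecutive periodic edges $(z_N,v_N)\mapsto z_{N+1}$ and $(z_{N+1},v_{N+1})\mapsto z_{N+2}$, leveraging the convexity of $\mathbb{S}$ intrinsic to the CC parametrisation (Proposition~\ref{prop:RFIT_CC}); the remainder of the argument is routine quadratic-form bookkeeping built on top of this certificate.
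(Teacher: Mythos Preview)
Your proposal is essentially the paper's proof: identical candidates $\bar y^+=z_{N+1}+\gamma e_y$ and $u^+=v_{N+1}+\gamma e_u$, the same suboptimality bounds $\Ve(\bar y,\bar y^+,z_N,v_N)\le\|(e_y,e_u)\|_Q^2$ and $\Ze(\bar y^+,\mathbf z^+,\mathbf v^+)\le\gamma^2\|(e_y,e_u)\|_P^2$, and the descent closed via~\eqref{eq:gamma_condition}. The only place you stop short is the feasibility check $(\bar y^+,u^+,z_{N+2}+\gamma(\bar y^+-z_{N+1}))\in\mathbb S$, which you correctly flag as the crux and defer to an unspecified convexity argument. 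The paper makes this step explicit by forming the nonnegative combination $\gamma\cdot(\text{vertex ineq.\ at }(\bar y,\bar u))+(1-\gamma)\cdot(\text{vertex ineq.\ at }(z_N,v_N))+(\text{vertex ineq.\ at }(z_{N+1},v_{N+1}))$ and regrouping via linearity of $FA_iV_j$, $FB_iU_j$ to obtain
\[
\big[F(A_iV_j\bar y^++B_iU_ju^+)+d\big]+\big[F(A_iV_jz_N+B_iU_jv_N)+d\big]\ \le\ z_{N+2}+\gamma^2e_y+z_{N+1},
\]
after which it cancels the second bracket against $z_{N+1}$ by invoking the $(z_N,v_N)$ inequality once more. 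This is exactly the ``coupling with the two consecutive periodic edges'' you anticipated.

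Your slack observation is sharp, and it is worth recording that the paper's last cancellation is itself delicate: from $A+B\le C+D$ together with $B\le D$ one cannot in general conclude $A\le C$, so the displayed inequality only yields $F(A_iV_j\bar y^++B_iU_ju^+)+d\le z_{N+2}+\gamma^2e_y+s$ with $s=z_{N+1}-F(A_iV_jz_N+B_iU_jv_N)-d\ge 0$, precisely the residual you identified. Likewise, your claim that the target ``reduces to'' $F(A_iV_je_y+B_iU_je_u)\le\gamma e_y$ is only a \emph{sufficient} condition (you implicitly used the $(z_{N+1},v_{N+1})$ edge with equality). In short, you and the paper pursue the same route; the paper spells out the convex combination you gesture at, but the slack you flagged is not cleanly absorbed by that written argument either, so your caution at this step is well placed.
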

\begin{proof}
    Under Assumption~\ref{ass:convex}, there exists $(\mathbf{z},\mathbf{v})$ satisfying~\eqref{eq:z_constraints}. Positive definiteness of $\Ze$ at $y=z_N$ follows because the objective of~\eqref{eq:terminal_cost} is positive definite about the origin. 
    
    To establish~\eqref{eq:Lyapunov_decease}, denote by $\bar{u}$ the minimizer of Problem~\eqref{eq:terminal_cost} at $y=\bar{y} \in \mathbb{T}(\mathbf{z},\mathbf{v})$
    \begin{align}
        \label{eq:rhs}
        \Ze(\bar{y},\mathbf{z},\mathbf{v})=\|(\bar{y}-z_N,\bar{u}-v_N)\|_P^2.
    \end{align}
    Comparing Problems~\eqref{eq:terminal_cost} and~\eqref{eq:cost_to_go} at $y=\bar{y}$ and setting
    \begin{align}
        \label{eq:candidate}
        y^+ = \bar{y}^+:=z_{N+1}+\gamma(\bar{y}-z_N),
    \end{align}
    we observe that $u=\bar{u}$ is a feasible solution such that
    \begin{align}
        \label{eq:lhs_2}
        \Ve(\bar{y},\bar{y}^+,z_N,v_N) \leq \|(\bar{y}-z_N,\bar{u}-v_N)\|_Q^2.
    \end{align}
    It remains to show that under~\eqref{eq:gamma_condition}
    \begin{align}
    \label{eq:lhs_1}
        \Ze(\bar{y}^+,\mathbf{z}^+,\mathbf{v}^+) \leq \gamma^2 \|(\bar{y}-z_N,\bar{u}-v_N)\|_P^2,
    \end{align}
    since combining~\eqref{eq:rhs},~\eqref{eq:lhs_2} and~\eqref{eq:lhs_1} with the matrix inequality in~\eqref{eq:gamma_condition} yields~\eqref{eq:Lyapunov_decease}.
    To prove~\eqref{eq:lhs_1}, we recall from the definitions of $\mathbb{S}$ in~\eqref{eq:S_definition} and $(\mathbf{z}^+,\mathbf{v}^+)$ in~\eqref{eq:cyclic_successor} that the optimization problem defining $\Ze(\bar{y}^+,\mathbf{z}^+,\mathbf{v}^+)$ is feasible if and only if there exists some $\bar{u}^+$ satisfying
    \begin{align}
    \label{eq:inequality_to_show}
       \hspace{-5pt} F(A_iV_j \bar{y}^++B_iU_j \bar{u}^+)+d \leq z_{N+2}+\gamma(\bar{y}^+-z_{N+1})
    \end{align}
     for all $(i,j) \in \{1,\myldots,m\} \times \{1,\myldots,\ve\}$, along with the input constraints $U_j \bar{u}^+ \in \mathbb{U}$. Towards showing~\eqref{eq:inequality_to_show}, we recall again that the definition of $\mathbb{S}$ implies that the inequalities
     \begin{subequations}
         \label{eq:inequalities_to_consider}
         \begin{align}
             &F(A_iV_j\bar{y}+B_iU_j \bar{u})+d \leq z_{N+1}+\gamma(\bar{y}-z_{N}), \label{eq:inequalities_to_consider:1}\\
             &F(A_iV_j z_N+B_i U_j v_N)+d  \leq z_{N+1}, \label{eq:inequalities_to_consider:2} \\
             &F(A_iV_j z_{N+1}+B_i U_j v_{N+1})+d  \leq z_{N+2} \label{eq:inequalities_to_consider:3}
         \end{align}
     \end{subequations}
     hold for all $(i,j) \in \{1,\myldots,m\} \times \{1,\myldots,\ve\}$ by feasibility of the problem defining $\Ze(\bar{y},\mathbf{z},\mathbf{v})$, and~\eqref{eq:z_constraints}. Multiplying the first inequality by $\gamma$, the second by $(1-\gamma)$, and summing all three in~\eqref{eq:inequalities_to_consider} gives
     \begin{align}
         &F(A_iV_j \bar{y}^++B_iU_j(v_{N+1}+\gamma(\bar{u}-v_N)) + d  \label{eq:main_summation}\\
         & \hspace{60pt} + F(A_iV_j z_N+B_iU_j v_N) + d \nonumber \\
         & \hspace{90pt} \leq z_{N+2}+\gamma^2(\bar{y}-z_N)+z_{N+1}. \nonumber 
     \end{align}
     As per the definition of the candidate $\bar{y}^+$ in~\eqref{eq:candidate}, we have
     \begin{align}
         \label{eq:ypp}
         z_{N+2}+\gamma(\bar{y}^+-z_{N+1})=z_{N+2}+\gamma^2(\bar{y}-z_N).
     \end{align}
     Substituting~\eqref{eq:ypp} in~\eqref{eq:main_summation} and exploiting~\eqref{eq:inequalities_to_consider:2}, we see that the inequality in~\eqref{eq:inequality_to_show} holds with 
     \begin{align}
         \label{eq:candidate_u}
         \bar{u}^+ = v_{N+1}+\gamma(\bar{u}-v_N).
     \end{align}
     Thus, $\Ze(\bar{y}^+,\mathbf{z}^+,\mathbf{v}^+)$ is feasible with $u=\bar{u}^+$, and substituting $(\bar{y}^+,\bar{u}^+)$ into the objective $\|(\bar{y}^+-z_{N+1},\bar{u}^+-v_{N+1})\|_P^2,$
     yields~\eqref{eq:lhs_1}. This completes the proof.
\end{proof}

In summary, Problem~\eqref{eq:periodic_MPC} can be implemented as  
\begin{align}
\min_{\mathbf{y},\mathbf{u},\mathbf{z},\mathbf{v}} &\
M_t(\mathbf{z},\mathbf{v}) +\!\!\sum_{k=0}^{N-1} \norm{\begin{bmatrix} y_k - z_k \\ u_k-v_k \end{bmatrix}}_Q^2 \!\!+  
\norm{\begin{bmatrix} y_N - z_N \\ u_N-v_N \end{bmatrix}}_P^2\notag\\
\text{s.t.}\; \ &\ Fx \leq y_0, \quad z_T = z_0,\notag\\
&\ (y_k, u_k, y_{k+1}) \in \mathbb{S},\quad  \forall \, k \in \{0,\myldots,N\!-\!1\},\label{eq:periodic_MPC_implementation_1}\\
&\ (z_j, v_j, z_{j+1}) \in \mathbb{S},\quad  \forall \, j \in \{0,\myldots,T\!-\!1\},\notag\\
&\ (y_N,u_N,z_{N+1} + \gamma(y_N-z_N)) \in \mathbb{S}.\notag
\end{align}

\subsection{Closed-loop control law}
Let the minimizers of Problem~\eqref{eq:periodic_MPC} at time $t$ be denoted by
$\mathbf{y}^*(x,t)$, $\mathbf{u}^*(x,t)$, $\mathbf{z}^*(x,t)$, and $\mathbf{v}^*(x,t)$.
The applied input is defined as a convex combination of the vertex inputs associated to the set $X(y_0^*(x,t))$
\begin{align}
\label{eq:MPC_control_law}
\mu(x,t) \coloneqq \sum_{j=1}^{\ve} \lambda_j(x,t)\,u^*_{0,j}(x,t),
\end{align}
where $\lambda(x,t) \in \R^{\ve}$ solves the QP
\begin{align*}
\min_{\lambda \ge 0} \ \norm{\lambda}_2^2 \quad
\text{s.t.} \quad
x = \sum_{j=1}^{\ve} \lambda_j\, V_j y^*_0(x,t), \qquad
\sum_{j=1}^{\ve} \lambda_j = 1.
\end{align*}
Under~\eqref{eq:MPC_control_law}, the closed-loop system evolves according to the polytopic LDI
\begin{align}
\label{eq:CL_system}
 x_{t+1} \in \{ A x_t + B\mu(x_t,t) + w \mid (A,B)\in\Delta, w\in\mathcal{W}\}.
\end{align}

\subsection{Recursive feasibility and stability analysis}
Let $\mathcal{O}(x,t)$ denote the optimal value of Problem~\eqref{eq:periodic_MPC_implementation_1} at time $t$, and define the Lyapunov candidate
\begin{align}
\label{eq:Lyapunov_function}
\mathcal{L}(x,t) \coloneqq \mathcal{O}(x,t) - \mathcal{O}^\circ.
\end{align}
\begin{corollary} \label{col:feasibility_stability_1}
    Suppose Assumption~\ref{ass:convex} holds, and let $Q,P \succ 0$ and $\gamma \in [0,1)$ in~\eqref{eq:periodic_MPC_implementation_1} satisfy~\eqref{eq:gamma_condition}. Then, for any periodic cost sequence \((M_t)_{t\in\mathbb{N}}\), the following hold for all trajectories of the closed-loop LDI~\eqref{eq:CL_system}: 
\begin{enumerate}
    \item If Problem~\eqref{eq:periodic_MPC_implementation_1} is feasible at \(t=0\), it remains feasible for all \(t\in\mathbb{N}\).
    \item If $M_t$ structure remains fixed (i.e., only modulo cycling of stage costs), then $\mathcal{L}(x,t)\ge 0$ for all $x,t$, and $\mathcal{L}(x,t)=0$ if and only if
    \begin{align*}
        (y_k^*(x,t),u_k^*(x,t)) &= (z_k^\circ, v_k^\circ),\\
        (z_j^*(x,t),v_j^*(x,t)) &= (z_j^\circ, v_j^\circ),
    \end{align*}
    for all $(k,j)\in \{0,\myldots,N\}\times\{0,\myldots,T\}$. 
    Moreover, $\mathcal{L}$ satisfies the Lyapunov decrease condition
    \begin{equation}
        \label{eq:Lyapunov_decrease}
        \begin{multlined}
            \hspace{-5pt} \sup_{x^+\in\,\eqref{eq:CL_system}}\mathcal{L}(x^+\!, t\!+\!1) - \mathcal{L}(x,t)\\
            \qquad\hphantom{.}\le - \mathrm{V}(y_0^*(x,t), y_1^*(x,t), z_0^*(x,t), v_0^*(x,t)).
        \end{multlined}
    \end{equation}
As a result, $\lim_{t\to\infty}\mathcal{L}(x,t)=0$, and the minimizers of~\eqref{eq:periodic_MPC_implementation_1} converge to the optimal PFIT
\begin{equation}
\label{eq:parameters_converge}
\begin{aligned}
&\lim_{t \to \infty} (y^*_k(x_t,t),u^*_k(x_t,t)) = (z_k^\circ,v_k^\circ), \\
&\lim_{t \to \infty} (z^*_j(x_t,t),v^*_j(x_t,t)) = (z_j^\circ,v_j^\circ).
\end{aligned}
\end{equation}
\end{enumerate}
\end{corollary}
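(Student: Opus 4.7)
The plan is to establish recursive feasibility by exhibiting an explicit candidate solution at time $t+1$, then use that candidate to bound $\mathcal{O}(x^+,t+1)$ from above; combined with the lower bound $\mathcal{O}(x,t)\ge\mathcal{O}^\circ$, this yields all claimed Lyapunov properties of $\mathcal{L}$, from which the convergence~\eqref{eq:parameters_converge} follows by a standard argument.

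\textbf{Candidate and recursive feasibility.} Given an optimizer $(\mathbf{y}^*,\mathbf{u}^*,\mathbf{z}^*,\mathbf{v}^*)$ of~\eqref{eq:periodic_MPC_implementation_1} at $(x,t)$, I would take as candidate for $(x^+,t+1)$ the cyclic successor $(\mathbf{z}^{*+},\mathbf{v}^{*+})$ from~\eqref{eq:cyclic_successor} for the periodic part---feasible because $z_T^*=z_0^*$ makes the $\mathbb{S}$-constraints cycle consistently---and, for the tube, the one-step shift $y_k^{\mathrm{cand}}=y_{k+1}^*,\ u_k^{\mathrm{cand}}=u_{k+1}^*$ for $k=0,\myldots,N-1$, appended with the terminal element $y_N^{\mathrm{cand}}=z_{N+1}^*+\gamma(y_N^*-z_N^*)$ and $u_N^{\mathrm{cand}}=v_{N+1}^*+\gamma(u_N^*-v_N^*)$ constructed in the proof of Theorem~\ref{thm:main_stability}. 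The initial constraint $Fx^+\le y_0^{\mathrm{cand}}=y_1^*$ holds because $\mathbf{y}^*$ is a robust forward invariant tube under all admissible uncertainty realizations in~\eqref{eq:CL_system}; the shifted stage constraints in $\mathbb{S}$ are inherited directly from those of $\mathbf{y}^*$; and the new terminal constraint is exactly the inequality~\eqref{eq:inequality_to_show} already verified in the proof of Theorem~\ref{thm:main_stability}. Recursive feasibility then follows by induction.

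\textbf{Lyapunov bound and descent.} The bound $\mathcal{L}\ge 0$ is immediate: $(\mathbf{z}^*,\mathbf{v}^*)$ is feasible for the DRTO problem~\eqref{eq:DRTO_problem}, so $M_t(\mathbf{z}^*,\mathbf{v}^*)\ge\mathcal{O}^\circ$, and the tracking and terminal quadratic terms are non-negative. Equality forces each of these summands to vanish; strict convexity (Assumption~\ref{ass:convex}) then identifies $(\mathbf{z}^*,\mathbf{v}^*)$ with the unique DRTO optimizer $(\mathbf{z}^\circ,\mathbf{v}^\circ)$, and the $Q$- and $P$-terms being zero force $(y_k^*,u_k^*)=(z_k^\circ,v_k^\circ)$ for all $k$ and $j$. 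For the descent, I would substitute the candidate above into the cost: a direct re-indexing using $z_T^*=z_0^*$ and $[t+T]=[t]$ gives $M_{t+1}(\mathbf{z}^{*+},\mathbf{v}^{*+})=M_t(\mathbf{z}^*,\mathbf{v}^*)$; the shifted tracking sum loses the $k=0$ summand $\|(y_0^*-z_0^*,u_0^*-v_0^*)\|_Q^2$ and gains a new summand at $k=N$; while by construction the new terminal cost equals $\gamma^{2}\,\|(y_N^*-z_N^*,u_N^*-v_N^*)\|_P^2$. Condition~\eqref{eq:gamma_condition} is exactly what is needed to absorb the gained $Q$-summand into the reduction of the terminal $P$-term, leaving the bound $\mathcal{O}^{\mathrm{cand}}(x^+,t+1)-\mathcal{O}(x,t)\le -\Ve(y_0^*,y_1^*,z_0^*,v_0^*)$, where the identity $\|(y_0^*-z_0^*,u_0^*-v_0^*)\|_Q^2=\Ve(y_0^*,y_1^*,z_0^*,v_0^*)$ is the cost-to-travel interpretation of the separable inner minimization over $u_0$. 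Inequality~\eqref{eq:Lyapunov_decrease} follows after taking the supremum over admissible $x^+$ and invoking optimality at $t+1$.

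\textbf{Convergence and main obstacle.} Monotonicity of $\mathcal{L}(x_t,t)\ge 0$ implies it converges, and telescoping~\eqref{eq:Lyapunov_decrease} yields $\Ve(y_0^*,y_1^*,z_0^*,v_0^*)\to 0$, whence $(y_0^*-z_0^*,u_0^*-v_0^*)\to 0$ by positive definiteness of $Q$. The main obstacle is turning this scalar convergence into the full limit~\eqref{eq:parameters_converge}: this requires a LaSalle-type invariance argument on the $T$-periodic closed-loop system to show that the only invariant set on which $\Ve$ vanishes is the graph of the optimal PFIT $(\mathbf{z}^\circ(\cdot),\mathbf{v}^\circ(\cdot))$, combining periodicity of $M_t$, compactness of the feasible decision set, continuity of the optimizer, and the uniqueness granted by strict convexity in Assumption~\ref{ass:convex}.
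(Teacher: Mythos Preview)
Your proposal is correct and follows essentially the same approach as the paper: both construct the identical candidate (cyclic successor for $(\mathbf{z},\mathbf{v})$, shifted tube appended with the Theorem~\ref{thm:main_stability} extension), use it to prove recursive feasibility, and derive the descent~\eqref{eq:Lyapunov_decrease} by the same telescoping-plus-\eqref{eq:gamma_condition} computation. You are in fact more explicit than the paper in several places---the equality characterisation of $\mathcal{L}=0$, the identity $\|(y_0^*-z_0^*,u_0^*-v_0^*)\|_Q^2=\Ve(y_0^*,y_1^*,z_0^*,v_0^*)$ via separability in $u_0$, and your candid acknowledgment that the passage from $\Ve\to 0$ to the full convergence~\eqref{eq:parameters_converge} requires an invariance/LaSalle step, which the paper simply asserts.
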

\begin{proof}
    \begin{enumerate}
        \item Assume feasibility at time \(t\in \mathbb{N}\). Define
        \begin{subequations} \label{eq:proposedNexSolution}
        \begin{align}
            \tilde{y}^+ &:= z^*_{N+1}(x_t,t)+\gamma(y^*_N(x_t,t)-z^*_{N}(x_t,t)), \\
            \tilde{u}^+ &:= v^*_{N+1}(x_t,t)+\gamma(u^*_N(x_t,t)-v^*_{N}(x_t,t)).
        \end{align}
        \end{subequations}
        Construct the feasible solution for $t+1$ as
        \begin{subequations}
        \label{eq:feasible_solution_MPC}
        \begin{align}
            \tilde{\mathbf{y}}&:=(y^*_1(x_t,t),\myldots,y^*_N(x_t,t),\tilde{y}^+), \\
             \tilde{\mathbf{u}}&:=(u^*_1(x_t,t),\myldots,u^*_N(x_t,t),\tilde{u}^+), \\
             \tilde{\mathbf{z}}&:=(z^*_1(x_t,t),\myldots,z^*_T(x_t,t),z^*_0(x_t,t)),\\
             \tilde{\mathbf{v}}&:=(v^*_1(x_t,t),\myldots,v^*_{T-1}(x_t,t),v^*_0(x_t,t)),
        \end{align}
        \end{subequations}
        where \((\tilde{\mathbf{z}},\tilde{\mathbf{v}})\) is the cyclic successor of \((\mathbf{z}^*,\mathbf{v}^*)\) as in \eqref{eq:cyclic_successor}.
        By Theorem~\ref{thm:main_stability}, this candidate satisfies all constraints for any cost $M_t$, proving recursive feasibility.
        \item If $y^*_0(x_t,t) \neq z^*_0(x_t,t)$, then
        \begin{align*}
            \mathrm{V}(y_0^*(x_t,t),y_1^*(x_t,t),z_0^*(x_t,t),v_0^*(x_t,t))>0
        \end{align*}
        by~\eqref{eq:cost_to_go}. Using the candidate in~\eqref{eq:feasible_solution_MPC}, the cost of Problem~\eqref{eq:periodic_MPC_2} at $(x_{t+1},t\!+\!1)$ is strictly less than $\mathcal{O}(x_t,t)$ for every $x_{t+1}\in\eqref{eq:CL_system}$. This is because the invariance constraints in $\mathbb{S}$ are enforced for all $(A_i,B_i)\in\Delta$ and disturbances $w\in\mathcal{W}$, guaranteeing feasibility regardless of the actual successor $x_{t+1}$. Consequently, the Lyapunov decrease condition~\eqref{eq:Lyapunov_decrease} holds uniformly over all admissible successors. Therefore, $\mathcal{L}(x_t,t)$ is strictly decreasing and, under Assumption~\ref{ass:convex}, the sequence of minimizers converges to the unique optimal PFIT as in~\eqref{eq:parameters_converge}.
    \end{enumerate}
\end{proof}

\section{QP-based periodic CCTMPC}
\label{sec:Sect4}

Problem~\eqref{eq:periodic_MPC_implementation_1} is a convex optimization problem with linear feasibility constraints, guaranteeing global optimality. However, the non-quadratic structure of the periodic cost $M_t$ makes its online solution computationally demanding. To improve real-time feasibility, we introduce a quadratic upper approximation of $M_t$ based on its first-order Taylor expansion under a mild smoothness assumption. This approximation enables a QP formulation that preserves recursive feasibility—since the constraint set remains unchanged—and ensures convergence to the same solution as Problem~\eqref{eq:periodic_MPC_implementation_1}.
\begin{assumption}
    \label{ass:lipschitz}
    For every $t \in \mathbb{N}$, the gradient of $M_t$ with respect to $(\mathbf{z},\mathbf{v})$ is Lipschitz continuous. That is, there exists $\eta > 0$ such that for any feasible pairs $(\mathbf{z}_1,\mathbf{v}_1)$ and $(\mathbf{z}_2,\mathbf{v}_2)$ of Problem~\eqref{eq:DRTO_problem},
    \begin{align*}
    	\norm{\nabla M_t(\mathbf{z}_1,\mathbf{v}_1)\! -\! \nabla M_t(\mathbf{z}_2,\mathbf{v}_2)} \leq 
    	\eta \cdot\! \norm{(\mathbf{z}_1,\mathbf{v}_1)\!-\!(\mathbf{z}_2,\mathbf{v}_2)}.\\
    \end{align*}
\end{assumption}
Given a feasible linearization point $(\hat{\mathbf{z}},\hat{\mathbf{v}})$ for~\eqref{eq:DRTO_problem}, we construct a quadratic upper approximation of $M_t$ that is parameterized by $(\hat{\mathbf{z}},\hat{\mathbf{v}})$. Specifically, for any $(\mathbf{z},\mathbf{v})$, define
\begin{multline} 
\label{eq:PFIT_1stOrderApprox}
\hat{M}_t(\mathbf{z},\mathbf{v};\hat{\mathbf{z}},\hat{\mathbf{v}}) :=
M_t(\hat{\mathbf{z}},\hat{\mathbf{v}})
\\+ \nabla M_t(\hat{\mathbf{z}},\hat{\mathbf{v}})^\top
\big[(\mathbf{z},\mathbf{v})\!-\!(\hat{\mathbf{z}},\hat{\mathbf{v}})\big]
+ \frac{\eta}{2}\norm{(\mathbf{z},\mathbf{v})\!-\!(\hat{\mathbf{z}},\hat{\mathbf{v}})}_2^2,
\end{multline}
where the semicolon emphasizes that $(\hat{\mathbf{z}},\hat{\mathbf{v}})$ are fixed parameters of the approximation. By construction, this function satisfies
\begin{equation}
\label{eq:upperbound}
M_t(\mathbf{z},\mathbf{v}) \le \hat{M}_t(\mathbf{z},\mathbf{v};\hat{\mathbf{z}},\hat{\mathbf{v}}), \qquad \forall(\mathbf{z},\mathbf{v}).
\end{equation}
Replacing $M_t$ by $\hat{M}_t$ in~\eqref{eq:periodic_MPC_implementation_1}, the QP-based periodic MPC formulation becomes
\begin{align}
\min_{\mathbf{y},\mathbf{u},\mathbf{z},\mathbf{v}} &\
\hat{M}_t(\mathbf{z},\mathbf{v};\hat{\mathbf{z}},\hat{\mathbf{v}})+\!\sum_{k=0}^{N-1} \norm{\scalemath{0.95}{\begin{bmatrix} y_k - z_k \\ u_k-v_k \end{bmatrix}}}_Q^2 \!\!+  
\norm{\scalemath{0.95}{\begin{bmatrix} y_N - z_N \\ u_N-v_N \end{bmatrix}}}_P^2\notag\\
\text{s.t.}\; \ &\ Fx \leq y_0, \quad z_T = z_0,\notag\\
&\ (y_k, u_k, y_{k+1}) \in \mathbb{S},\quad  \forall \, k \in \{0,\myldots,N\!-\!1\},\label{eq:periodic_MPC_implementation_2}\\
&\ (z_j, v_j, z_{j+1}) \in \mathbb{S},\quad  \forall \, j \in \{0,\myldots,T\!-\!1\},\notag\\
&\ (y_N,u_N,z_{N+1} + \gamma(y_N-z_N)) \in \mathbb{S}.\notag
\end{align}

Denote the minimizers of Problem~\eqref{eq:periodic_MPC_implementation_2} by
$\mathbf{y}^{\bullet}(x,t,\hat{\mathbf{z}},\hat{\mathbf{v}})$,
$\mathbf{u}^{\bullet}(x,t,\hat{\mathbf{z}},\hat{\mathbf{v}})$,
$\mathbf{z}^{\bullet}(x,t,\hat{\mathbf{z}},\hat{\mathbf{v}})$,
and $\mathbf{v}^{\bullet}(x,t,\hat{\mathbf{z}},\hat{\mathbf{v}})$. The corresponding MPC control law is
\begin{align}
\hat{\mu}(x,t,\hat{\mathbf{z}},\hat{\mathbf{v}})\coloneqq
\sum_{j=1}^{\ve} \hat{\lambda}_j(x,t,\hat{\mathbf{z}},\hat{\mathbf{v}})\,
u^{\bullet}_{0,j}(x,t,\hat{\mathbf{z}},\hat{\mathbf{v}}),
\end{align}
where $\hat{\lambda}(x,t,\hat{\mathbf{z}},\hat{\mathbf{v}})\in\mathbb{R}^{\ve}$ solves
\begin{align*}
\min_{\hat{\lambda}\geq 0}\ \|\hat{\lambda}\|_2^2
\quad \text{s.t.}\quad
x=\sum_{j=1}^{\ve}\hat{\lambda}_j V_j y^{\bullet}_0(x,t,\hat{\mathbf{z}},\hat{\mathbf{v}}),\ 
\sum_{j=1}^{\ve}\hat{\lambda}_j=1.
\end{align*}

Under this control law, the closed-loop system evolves as
\begin{align}
\label{eq:CL_system_2}
x_{t+1}\in\{A x_t+B\hat{\mu}(x_t,t,\hat{\mathbf{z}},\hat{\mathbf{v}})+w\mid(A,B)\in\Delta,\ w\in\mathcal{W}\}.
\end{align}

\begin{algorithm}[t]
\caption{QP-Based Approximation for Periodic Tube MPC}
\label{alg:approximatedMPC}
\textbf{Require:} Initial feasible $(\hat{\mathbf{z}},\hat{\mathbf{v}})$.
\\At each sampling time $t \geq 0$:
\begin{algorithmic}[1]
  \State Read current state $x_t$.
  \State Solve QP~\eqref{eq:periodic_MPC_implementation_2} to compute
  $(\mathbf{y}^{\bullet}, \mathbf{u}^{\bullet}, \mathbf{z}^{\bullet}, \mathbf{v}^{\bullet})(x_t,t,\hat{\mathbf{z}},\hat{\mathbf{v}})$. 
  \State Compute $\hat{\mu}(x_t,t,\hat{\mathbf{z}},\hat{\mathbf{v}})$ and apply to system~\eqref{eq:system}.
  \State Update $(\hat{\mathbf{z}},\hat{\mathbf{v}}) \leftarrow (\begin{bmatrix}{z}^\bullet_{1},\myldots,{z}^\bullet_{T\text{-}1},{z}^\bullet_{0}\end{bmatrix}, \begin{bmatrix}{v}^\bullet_{1},\myldots,{v}^\bullet_{T\text{-}1},{v}^\bullet_{0}\end{bmatrix})$.
\end{algorithmic}
\end{algorithm}
In Algorithm~\ref{alg:approximatedMPC}, the linearization point $(\hat{\mathbf{z}},\hat{\mathbf{v}})$ at each time step is chosen as the optimizer of Problem~\eqref{eq:periodic_MPC_implementation_2} from the previous iteration. To initialize the algorithm, a feasible pair $(\hat{\mathbf{z}},\hat{\mathbf{v}})$ can be obtained, e.g., by solving Problem~\eqref{eq:DRTO_problem}.
The following result establishes stability properties of the closed-loop system.
\begin{theorem}
Suppose Assumptions~\ref{ass:convex} and~\ref{ass:lipschitz} hold, and let $Q,P \succ 0$ and $\gamma \in [0,1)$ in~\eqref{eq:periodic_MPC_implementation_2} satisfy~\eqref{eq:gamma_condition}. Then, under Algorithm~\ref{alg:approximatedMPC}, the following hold for all trajectories of the closed-loop system~\eqref{eq:CL_system_2}:
\begin{enumerate}
    \item If Problem~\eqref{eq:periodic_MPC_implementation_2} is feasible at $t=0$, it remains feasible for all $t \in \mathbb{N}$.
    \item  If $M_t$ structure remains fixed (i.e., only modulo cycling of stage costs), the closed-loop system is asymptotically stable, and the minimizers converge to the optimal PFIT parameters
    \begin{equation}
\label{eq:parameters_converge_approx}
\begin{aligned}
&\lim_{t \to \infty} (y^\bullet_k(x_t,t,\hat{\mathbf{z}},\hat{\mathbf{v}}),u^\bullet_k(x_t,t,\hat{\mathbf{z}},\hat{\mathbf{v}})) = (z_k^\circ,v_k^\circ), \\
&\lim_{t \to \infty} (z^\bullet_j(x_t,t,\hat{\mathbf{z}},\hat{\mathbf{v}}),v^\bullet_j(x_t,t,\hat{\mathbf{z}},\hat{\mathbf{v}})) = (z_j^\circ,v_j^\circ).
\end{aligned}
\end{equation}
for all $(k,j) \in \{0,\myldots,N\}\times\{0,\myldots,T\}$.
\end{enumerate}
\end{theorem}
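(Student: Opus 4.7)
The plan is to inherit the recursive feasibility argument from Corollary~\ref{col:feasibility_stability_1} almost verbatim, and then to build a Lyapunov function that absorbs the majorization gap between $M_t$ and $\hat{M}_t$, yielding both stability and convergence of the PFIT parameters. Recursive feasibility should be immediate: the constraint set of~\eqref{eq:periodic_MPC_implementation_2} coincides with that of~\eqref{eq:periodic_MPC_implementation_1}, and the update $(\hat{\mathbf{z}},\hat{\mathbf{v}})\gets(\mathbf{z}^{\bullet},\mathbf{v}^{\bullet})^+$ in Algorithm~\ref{alg:approximatedMPC} preserves PFIT admissibility by Proposition~\ref{prop:fixed_cost}, so the shifted candidate~\eqref{eq:feasible_solution_MPC} remains admissible at $t+1$ regardless of the realized successor $x_{t+1}\in\eqref{eq:CL_system_2}$.

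For stability, I would adopt as Lyapunov function the optimal value $\hat{\mathcal{O}}_t$ of~\eqref{eq:periodic_MPC_implementation_2} shifted by $\mathcal{O}^\circ$, and introduce the majorization gap $\Delta_t := \hat{M}_t(\mathbf{z}^{\bullet},\mathbf{v}^{\bullet};\hat{\mathbf{z}}_t,\hat{\mathbf{v}}_t) - M_t(\mathbf{z}^{\bullet},\mathbf{v}^{\bullet})\ge 0$, whose non-negativity is exactly~\eqref{eq:upperbound}. The pivotal observation is that, when evaluated on the shifted candidate, the PFIT variables coincide with the new linearization point, so the surrogate collapses to the true cost:
\[
\hat{M}_{t+1}(\hat{\mathbf{z}}_{t+1},\hat{\mathbf{v}}_{t+1};\hat{\mathbf{z}}_{t+1},\hat{\mathbf{v}}_{t+1}) = M_{t+1}(\hat{\mathbf{z}}_{t+1},\hat{\mathbf{v}}_{t+1}) = M_t(\mathbf{z}^{\bullet},\mathbf{v}^{\bullet}),
\]
where the last equality is the cyclic invariance of $M$ from Proposition~\ref{prop:fixed_cost}. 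Combining this with the terminal descent~\eqref{eq:Lyapunov_decease} of Theorem~\ref{thm:main_stability} for the tracking block, and bounding the surrogate at time $t$ from below via~\eqref{eq:upperbound}, I expect to derive the uniform decrease
\[
\hat{\mathcal{O}}_{t+1} - \hat{\mathcal{O}}_t \le -\mathrm{V}(y_0^{\bullet},y_1^{\bullet},z_0^{\bullet},v_0^{\bullet}) - \Delta_t
\]
over every $x_{t+1}\in\eqref{eq:CL_system_2}$, which immediately delivers boundedness of the Lyapunov function and the asymptotic stability statement.

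For the parameter convergence~\eqref{eq:parameters_converge_approx}, telescoping the decrease gives $\mathrm{V}_t,\Delta_t\to 0$. The first limit drives $y_0^{\bullet}\to z_0^{\bullet}$ and, cascaded with the terminal descent, pushes all tracking stage costs to zero. The second limit forces the QP minimizer in the PFIT block to approach the linearization point $(\hat{\mathbf{z}}_t,\hat{\mathbf{v}}_t)$. At any cluster point the quadratic correction in~\eqref{eq:PFIT_1stOrderApprox} has vanishing gradient and value, so the KKT conditions of~\eqref{eq:periodic_MPC_implementation_2} reduce to those of~\eqref{eq:DRTO_problem}; strict convexity from Assumption~\ref{ass:convex} then identifies every cluster point with $(\mathbf{z}^\circ(t),\mathbf{v}^\circ(t))$.

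The hard part will be this last step: promoting $\mathrm{V}_t,\Delta_t\to 0$ to genuine convergence of the PFIT parameters. Unlike in Corollary~\ref{col:feasibility_stability_1}, one cannot directly invoke uniqueness of the QP minimizer to pin down the limit, because the surrogate depends on the time-varying linearization point that itself evolves by cyclic shift. I would address this either through a LaSalle-style invariance argument exploiting the strict convexity of $M_t$, or via a quantitative descent-lemma bound relating $\Delta_t$ to $\|(\mathbf{z}^{\bullet},\mathbf{v}^{\bullet})-(\hat{\mathbf{z}}_t,\hat{\mathbf{v}}_t)\|$, combined with continuity of the argmin of~\eqref{eq:periodic_MPC_implementation_2} in its parameters. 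Everything else in the argument reduces to the exact-formulation proof with $M_t$ replaced by its majorizer.
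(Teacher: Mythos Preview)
Your proposal is correct and follows essentially the same route as the paper: identical recursive-feasibility argument, the same Lyapunov candidate $\hat{\mathcal{O}}_t-\mathcal{O}^\circ$, and the same shifted candidate together with the collapse $\hat{M}_{t+1}(\hat{\mathbf{z}}_{t+1},\hat{\mathbf{v}}_{t+1};\hat{\mathbf{z}}_{t+1},\hat{\mathbf{v}}_{t+1})=M_{t+1}(\hat{\mathbf{z}}_{t+1},\hat{\mathbf{v}}_{t+1})$ combined with~\eqref{eq:upperbound} and the terminal descent from Theorem~\ref{thm:main_stability}. The only cosmetic difference is that the paper packages the $\hat{M}$ bookkeeping via an auxiliary linearization point $(\tilde{\mathbf{z}}^+,\tilde{\mathbf{v}}^+)$ (the cyclic shift of the \emph{old} $(\hat{\mathbf{z}},\hat{\mathbf{v}})$) and a split $\Delta L=\Delta L_1+\Delta L_2$, whereas you extract the majorization gap $\Delta_t$ directly; the two decompositions are algebraically equivalent, and your explicit $-\mathrm{V}_t-\Delta_t$ bound is in fact slightly sharper than what the paper states, while the final convergence step is handled at the same level of detail in both.
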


\begin{proof}
1)\; Problems~\eqref{eq:periodic_MPC_implementation_1} and~\eqref{eq:periodic_MPC_implementation_2} share the same constraint set; hence, the argument in Corollary~\ref{col:feasibility_stability_1} for recursive feasibility applies directly. \\[4pt]
2)\; For stability, fix a feasible linearization point $(\hat{\mathbf{z}},\hat{\mathbf{v}})$ at time $t$. Denote the objective function of Problem~\eqref{eq:periodic_MPC_implementation_2} by
\(L(\mathbf{y},\mathbf{u}, \mathbf{z},\mathbf{v}; x,t,\hat{\mathbf{z}},\hat{\mathbf{v}}),\)
and let
\[
\hat{\mathcal{O}}(x,t,\hat{\mathbf{z}},\hat{\mathbf{v}}) := L(\mathbf{y}^\bullet,\mathbf{u}^\bullet,\mathbf{z}^\bullet,\mathbf{v}^\bullet;x,t,\hat{\mathbf{z}},\hat{\mathbf{v}})
\]
be its optimal value. Define the Lyapunov candidate as
\begin{align}
\label{eq:approxLyapunovFunction}
\hat{\mathcal{L}}(x,t,\hat{\mathbf{z}},\hat{\mathbf{v}}) := \hat{\mathcal{O}}(x,t,\hat{\mathbf{z}},\hat{\mathbf{v}}) - \mathcal{O}^\circ.
\end{align}
At time $t+1$, the cyclic successors
\[
\mathbf{z}^+ = (z^\bullet_1,\dots,z^\bullet_{T-1},z^\bullet_0),\quad
\mathbf{v}^+ = (v^\bullet_1,\dots,v^\bullet_{T-1},v^\bullet_0),
\]
and the shifted sequences constructed as in Theorem~\eqref{thm:main_stability}
\[
\mathbf{y}^+ = (y^\bullet_1,\dots,y^\bullet_{N-1},\tilde{y}^+),\quad
\mathbf{u}^+ = (u^\bullet_1,\dots,u^\bullet_{N-1},\tilde{u}^+)
\]
form a feasible candidate for Problem~\eqref{eq:periodic_MPC_implementation_2}. Algorithm~\ref{alg:approximatedMPC} updates the linearization point to $(\hat{\mathbf{z}}^+,\hat{\mathbf{v}}^+) = (\mathbf{z}^+,\mathbf{v}^+)$.

To establish stability, we must show
\begin{multline}
\label{eq:first_ineq_QP}
   \Delta L := \sup_{x_{t+1}\in\eqref{eq:CL_system_2}} L(\mathbf{y}^+,\mathbf{u}^+,\mathbf{z}^+,\mathbf{v}^+;x_{t+1},t\!+\!1,\hat{\mathbf{z}}^+,\hat{\mathbf{v}}^+) \\- \hat{\mathcal{O}}(x,t,\hat{\mathbf{z}},\hat{\mathbf{v}}) \le 0. 
\end{multline}
Since the invariance constraints in $\mathbb{S}$ hold for all $(A_i,B_i)\in\Delta$ and $w\in\mathcal{W}$, the candidate remains feasible for every $x_{t+1}$ in the polytopic LDI~\eqref{eq:CL_system_2}. Moreover, the objective depends on $x_{t+1}$ only through the initial constraint, which is satisfied by construction; hence, the supremum coincides with the candidate cost and the decrease condition will hold uniformly.

Introduce an auxiliary linearization point $(\tilde{\mathbf{z}}^+,\tilde{\mathbf{v}}^+)$ at $t+1$
\begin{align*}
\tilde{\mathbf{z}}^{{+}}  = (\hat{z}_{1},\dots,\hat{z}_{T{-}1},\hat{z}_0),\quad \tilde{\mathbf{v}}^{{+}} = (\hat{v}_{1},\dots,\hat{v}_{T{-}1},\hat{v}_0) 
,
\end{align*}
and split $\Delta L:=\Delta L_1 + \Delta L_2$ where
\begin{align*}
\begin{multlined}
\Delta L_1 = L(\mathbf{y}^+,\mathbf{u}^+,\mathbf{z}^+,\mathbf{v}^+;x_{t+1},t+1,\hat{\mathbf{z}}^+,\hat{\mathbf{v}}^+) \hspace{30pt}\hphantom{.}\\\hspace{10pt}\hphantom{.} - L(\mathbf{y}^+,\mathbf{u}^+,\mathbf{z}^+,\mathbf{v}^+;x_{t+1},t+1,\tilde{\mathbf{z}}^+,\tilde{\mathbf{v}}^+),    
\end{multlined}
\\[2pt]
\begin{multlined}
\hspace{-15pt}\Delta L_2 = L(\mathbf{y}^+,\mathbf{u}^+,\mathbf{z}^+,\mathbf{v}^+;x_{t+1},t+1,\tilde{\mathbf{z}}^+,\tilde{\mathbf{v}}^+) \hspace{30pt}\hphantom{.}\\\hspace{10pt}\hphantom{.} - \hat{\mathcal{O}}(x,t,\hat{\mathbf{z}},\hat{\mathbf{v}}).    
\end{multlined}
\end{align*}
For $\Delta L_1$ the difference reduces to
\begin{align*}
    \Delta L_1 = \hat{M}_{t+1}(\mathbf{z}^+,\mathbf{v}^+;\hat{\mathbf{z}}^+,\hat{\mathbf{v}}^+)-\hat{M}_{t+1}(\mathbf{z}^+,\mathbf{v}^+;\tilde{\mathbf{z}}^+,\tilde{\mathbf{v}}^+).
\end{align*}
Since $(\mathbf{z}^+,\mathbf{v}^+)=(\hat{\mathbf{z}}^+,\hat{\mathbf{v}}^+)$, we have
\begin{align*}
    \Delta L_1 = M_{t+1}(\mathbf{z}^+,\mathbf{v})-\hat{M}_{t+1}(\mathbf{z}^+,\mathbf{v}^+;\tilde{\mathbf{z}}^+,\tilde{\mathbf{v}}^+) \leq 0,
\end{align*}
with the inequality following from~\eqref{eq:upperbound}. For $\Delta L_2$ we obtain
\begin{align}
    \Delta L_2 &= \hat{M}_{t+1}(\mathbf{z}^+,\mathbf{v}^+;\tilde{\mathbf{z}}^+,\tilde{\mathbf{v}}^+)-\hat{M}_t(\mathbf{z}^\bullet,\mathbf{v}^\bullet;\hat{\mathbf{z}},\hat{\mathbf{v}}) \nonumber \\
    & \hspace{10pt}-\left(\norm{\begin{bmatrix} y^\bullet_0 - z^\bullet_0 \\ u^\bullet_0-v^\bullet_0 \end{bmatrix}}_Q^2\right) \label{eq:L2_definition}\\
    & \hspace{20pt} - \left(\begin{matrix*}[l]
    \norm{\begin{bmatrix} y^\bullet_N - z^\bullet_N \\ u^\bullet_N-v^\bullet_N \end{bmatrix}}_P^2 -\norm{\begin{bmatrix} \tilde{y}^+ - z^\bullet_{N+1} \\ \tilde{u}^+ -v^\bullet_{N+1} \end{bmatrix}}_P^2 \\
    \hspace{100pt} - \norm{\begin{bmatrix} y^\bullet_N - z^\bullet_N \\ u^\bullet_N-v^\bullet_N \end{bmatrix}}_Q^2
    \end{matrix*} \right). \nonumber
\end{align}
In \eqref{eq:L2_definition}, the \(\hat{M}\) terms in the first row cancels out, being $(\mathbf{z}^+,\mathbf{v}^+)$ and $(\tilde{\mathbf{z}}^+,\tilde{\mathbf{v}}^+)$ the cycled successors of $(\mathbf{z}^\bullet,\mathbf{v}^\bullet)$ and $(\hat{\mathbf{z}},\hat{\mathbf{v}})$ respectively. Moreover, by positive definiteness of \(Q,P\) and condition~\eqref{eq:gamma_condition}, the remaining quadratic terms are nonnegative, implying \(\Delta L_2 \le 0\).

Therefore, the inequality~\eqref{eq:first_ineq_QP} holds, with strict decrease whenever $y^\bullet_0(x_t,t)\neq z^\bullet_0(x_t,t)$. Since $\hat{\mathcal{L}}$ is bounded below and strictly decreases otherwise, it converges to zero. Under Assumption~\ref{ass:convex}, the minimizers converge to the unique PFIT as in~\eqref{eq:parameters_converge_approx}.
\end{proof}

\section{Numerical Examples} \label{sec:Sect5}
This section presents two case studies to validate the performance and robustness of the proposed CCTMPC scheme. In both examples, the terminal cost defined in \eqref{eq:terminal_cost} is computed using a discount factor of $\gamma = 0.95$. All simulations are conducted in MATLAB and average computational time is obtained on a i5-8350U CPU.

\subsection{Illustrative Example}
We consider a double integrator system affected by both additive disturbances and multiplicative uncertainties. The system dynamics are
\begin{align}
\label{eq:2DIntegrator}
    x^+ = \begin{bmatrix}
        1 + \zeta & 1 + \zeta \\ 0 & 1 + \zeta
    \end{bmatrix} x + \begin{bmatrix}
        0 \\ 1 + \zeta
    \end{bmatrix} + w, \quad | \zeta | \leq 0.05.
\end{align}
where the uncertain system in~\eqref{eq:system} is represented as the convex hull of \(m = 2\) models corresponding to the extremal realizations of \(\zeta\).
The state and input constraints are $\mathcal{X}= [-8,3]\times[-1.5,4]$ and $\mathcal{U}=[-1,1]$, the additive disturbance satisfies $\mathcal{W}=\{w:|w|\leq [0\ \ 0.1]^\top\}$.
The matrix $F$ in~\eqref{eq:PolytopicRFIT} is constructed following the procedure in \cite[Section~4.2]{badalamenti2025efficientcctmpc}, using as base hyperplane vectors $\hat{F}_i=[\cos(\frac{i-1}{8}2\pi)\ \ \sin(\frac{i-1}{8}2\pi)]$, $i=1,\myldots,8$, resulting in $\fe=\ve=\ee=8$.
The control scheme uses a prediction horizon of $N = 2$ and a tube period of $T = 6$. The time-varying stage cost in Problem~\eqref{eq:DRTO_problem} is defined as
\[\ell_{[t+j]}(z_j,v_j) \coloneqq 10^2\norm{z_j-(z_m+F x^{ref}_{[t+j]})}_2^2 + 10^{-2}\norm{v_j}_2^2,\]
where the vector $z_m$ is computed offline by solving the auxiliary optimization problem
\begin{equation}
\label{eq:optimalRCI}
\begin{aligned}
\min_{z_m,v_m} &\ \ell(z_m,v_m)\\
\text{s.t.} &\ (z_m,v_m,z_m)\in\mathbb{S},
\end{aligned}    
\end{equation}
which determines an optimal Robust Control Invariant (RCI) set. The center of this set is then translated by the desired periodic reference $x^{ref}_{[t+j]}\in\{x^{ref}_0,\myldots, x^{ref}_T\}=:\mathbf{x}^{ref}$, cycled according to~\eqref{eq:cyclical_function}. The cost function in~\eqref{eq:optimalRCI} is given by $\ell(z_m,v_m) \coloneqq \norm{(z_m,v_m)}^2_{(Q_c+Q_v)}$, with $Q_c=(\bar{V},\bar{U})^\top \bar{Q}_c(\bar{V},\bar{U})$ and $Q_v = \sum_{j=1}^{\ve}(\bar{V}-V_k,\bar{U}-U_k)^\top \bar{Q}_v(\bar{V}-V_k,\bar{U}-U_k)$, where
$\bar{Q}_c=\I_3$ and $\bar{Q}_v = \operatorname{blkd(10\I_2,1)}$. The tracking cost matrices in the MPC formulation are set as $Q=Q_v+10^{-2}\I_{16}$, and $P=(1-\gamma^2)^{-1}Q$. 

Since the resulting cost in Problem~\eqref{eq:periodic_MPC_implementation_1} is quadratic, no first-order approximation is required. Figure~\ref{fig:2D_statespace} illustrates the closed-loop evolution of the tube and reference sets: blue sets represent the system tube, green sets denote the artificial periodic reference, and red sets indicate the desired optimal PFIT corresponding to different reference vectors \(\mathbf{x}^{ref}\). These results confirm that the proposed control scheme effectively drives the system to follow the desired periodic tube. Figure~\ref{fig:LyapunovExamples} shows the evolution of the Lyapunov function $\mathcal{L}_t$ defined in~\eqref{eq:Lyapunov_function}, which converges to zero after each change in the reference.
\begin{figure}
    \centering
    \includegraphics[width=1\linewidth]{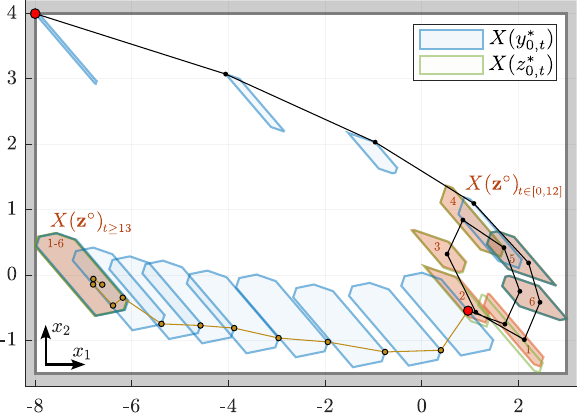}
    \caption{Closed-loop evolution of System~\eqref{eq:2DIntegrator}. Dotted lines indicate state trajectories, with red dots marking states at $t=0$ and at the reference change. For $t\in[0,12]$, $x^{ref}_{[t+j]} = (3\cos(\theta_{[t+j]}),\frac{1}{2}\sin(\theta_{[t+j]}))+(\frac{3}{2},0)$, with $\theta_{[t+j]}=-(\frac{[t+j]}{T}2\pi)-\frac{\pi}{2}$. When $t\ge13$, $x^{ref}_{[t+j]}=(-10,0)$. Sets $X(y_0^*)$ and $X(z_0^*)$ are shown for $t\leq5$ and $t\in[13,21]$.}
    \label{fig:2D_statespace}
\end{figure}

\subsection{Ball-Plate system}
We consider a linearized ball–plate system~\cite{PeriodicBallPlate_7172461}, discretized with a sampling time of \(0.4\,\mathrm{s}\). The system dynamics are described by
\begin{align}
    \label{eq:BallPlateSysEq}
x^+ = \operatorname{blkd(A,A)}x+\operatorname{blkd(B,B)}u + w,
\end{align}
where 
\begin{align}
A=\begin{bmatrix}1 & 0.4 & 0.0704 & 8\cdot10^{-4}\\
0 & 1 & 2.8 & 0.0704\\
0 & 0 & 1 & 0.4\\
0 & 0 & 0 & 1
\end{bmatrix}, \quad B = \begin{bmatrix}0\\
8\cdot10^{-4}\\
0.0104\\
0.4
\end{bmatrix}.
\end{align}
The state vector is partitioned as $x=(x^1,x^2)$,
where for each axis \( i\in\{1,2\} \) the state \( x^i = (x_b^i,\dot{x}_b^i,\theta_p^i,\dot{\theta}_p^i) \) includes the ball position and velocity \( (x_b^i,\dot{x}_b^i) \) and the plate angle and angular velocity \( (\theta_p^i,\dot{\theta}_p^i) \). The input vector is $u=(\ddot{\theta}_p^1, \ddot{\theta}_p^2)$, representing the plate angular accelerations along each axis. State constraints reflect the physical limitations of the plate
$$\mathcal{X}=\left\{ |x^{1,2}|\leq (0.25,2,\frac{\pi}{3},\pi),\ |x_b^1|+|x_b^2|\leq0.3\right\}$$
with an additional constraint \( x_b^1\le 0.15 \) to exclude an undesired region. The input set is defined as $\mathcal{U}=\{ |u|\leq 2\, \mathrm{rad/s^2}\}$, while additive disturbance \(w\) affects only the zero-order dynamics and is modeled as
\begin{align*}
    \mathcal{W} = \left\{ \operatorname{blkd}(B_w,B_w) w \ \middle|\ |w|\leq 8\cdot10^{-5}\begin{bmatrix} 1\\1\end{bmatrix} \right\},\ B_w= \scalemath{0.9}{\begin{bmatrix}
       1&0\\0&0\\0&1\\0&0
    \end{bmatrix}}.
\end{align*}

For the tube design, the matrix \(F\) in~\eqref{eq:PolytopicRFIT} is constructed following the procedure in~\cite[Section~4.2]{badalamenti2025efficientcctmpc}. The base template \(\hat{F}\) is the Cartesian product of a hexagon in the \((x_b^1, x_b^2)\)-plane and a 6-dimensional simplex spanning the remaining state dimensions. To ensure the resulting base polytope is simple, we use the offset vector \((\frac{\sqrt{3}}{2}\mathbf{1}_6,\, \mathbf{0}_6,\, 1)\), so that each vertex has degree \(n_x = 8\) in the incidence map. This yields a configuration-constrained polytope with \(\fe = 13\), \(\ve = 42\), and \(\ee = 7\).
The control scheme uses a prediction horizon of \(N=2\) and a PFIT period of \(T=20\). The time-varying stage cost in Problem~\eqref{eq:DRTO_problem} is defined as
\begin{multline*}
    \ell_{[t+j]}(z_j,v_j)\coloneqq \sum_{i=1}^{\ve}\Bigl(10\norm{CV_iz_j-r_{[t+j]}}_2^2+\\ 0.1\log\left(\prod_{k=1}^{n_u}\cosh{(2u_{ik,j})}\right)\Bigr),
\end{multline*}
where $C=[e_8(1)\ \,e_8(5)]^\top$ is the output matrix of system~\eqref{eq:BallPlateSysEq}, and $r_{[t+j]}\in\{r_0,\myldots,r_T\} =: \mathbf{r}$ is the desired output reference to be tracked by all tube vertices. Here, \(u_{ik,j}\) denotes the \(k\)-th dimension of the control input associated with the \(i\)-th vertex at time \(j\). The logarithmic penalty discourages small-magnitude inputs, helping to avoid motor deadzones and improve actuation efficiency.

Since the cost is convex but not quadratic, we approximate it using the first-order Taylor expansion \(\hat{M}\) as described in \eqref{eq:PFIT_1stOrderApprox}. The Lipschitz constant is set as $\eta = 3\nabla^2_{(z,v)=(0,0)}M(z,v)$, which upper-bounds the Hessian over the admissible range of \((z,v)\). The tracking cost matrices are $Q=\operatorname{blkd}(10^{-1}\I_{13},10^{-2}\I_{42})$, and $P=(1-\gamma^2)^{-1}Q$.
\begin{figure}
\centering
\includegraphics[width=\linewidth]{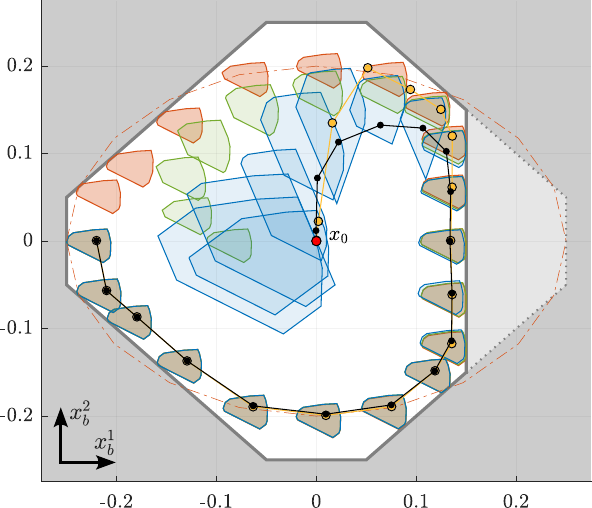}
    \caption{Closed-loop evolution of system~\eqref{eq:BallPlateSysEq} under Problem~\eqref{eq:periodic_MPC_implementation_1} (yellow trajectory) and its QP approximation in Problem~\eqref{eq:periodic_MPC_implementation_2} (black trajectory), projected onto the \((x_b^1,x_b^2)\)-plane. The white region denotes the feasible state set \(\mathcal{X}\), and the dashed red ellipse represents the periodic output reference \(r_{[t+j]} = (0.25\cos\theta_{[t+j]},\,0.2\sin\theta_{[t+j]})\), with \(\theta_{[t+j]} =\pi -\frac{[t+j]}{T}2\pi\). . The system is initialized at the origin.}
    \label{fig:BallPlate_statespace}
\end{figure}

Figure~\ref{fig:BallPlate_statespace} shows the closed-loop evolution of the system projected onto the \((x_b^1,x_b^2)\)-plane for $t\in[0,T\!-\!1]$. Blue, green, and red sets represent the state, artificial, and periodic closed-loop tubes for the approximated QP problem, respectively, demonstrating convergence of the proposed scheme. The black and yellow dotted lines compare trajectories under the approximated and original economic formulations, highlighting the slower convergence of the former.
This behavior is confirmed by Figure~\ref{fig:LyapunovExamples}, which compares the Lyapunov functions \(\hat{\mathcal{L}}_t\) and \(\mathcal{L}_t\) from~\eqref{eq:approxLyapunovFunction} and~\eqref{eq:Lyapunov_function}, respectively. As expected, the first-order approximation exhibits linear convergence, while the original formulation shows quadratic convergence. However, the approximated problem is significantly more computationally efficient, with an average solve time of \(1.31\,\mathrm{s}\) using Gurobi~\cite{gurobi}, compared to \(26.68\,\mathrm{s}\) using IPOPT (via CasADi~\cite{Andersson2019}) for the original formulation.
\begin{figure}
\centering
\includegraphics[width=0.49\linewidth]{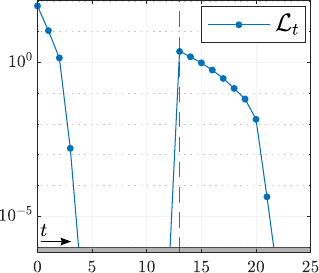} \hfill
\includegraphics[width=0.49\linewidth]{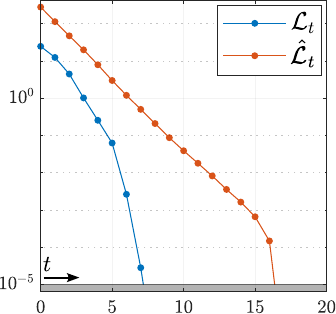}
    \caption{Lyapunov functions for system~\eqref{eq:2DIntegrator} (left) and~\eqref{eq:BallPlateSysEq} (right). The vertical dashed line indicates a change in the reference.}
    \label{fig:LyapunovExamples}
\end{figure}

\subsection{Computational Complexity}

The optimization problems~\eqref{eq:periodic_MPC_implementation_1} and~\eqref{eq:periodic_MPC_implementation_2} involve \((N+T+1)(\fe + n_u \ve) + \fe\) decision variables. Each constraint set \(\mathbb{S}\)--of which \(N+T+1\) are required, in addition to \(\fe\) inequality and \(\fe\) equality constraints from the initial and periodicity conditions--contains \((m\ve + 1)\fe + \ve(n_\mathcal{X} + n_\mathcal{U})\) inequalities, where \(n_\mathcal{X}\) and \(n_\mathcal{U}\) denote the number of hyperplanes defining the state and input constraints, respectively.

As a result, computing a full periodic trajectory is significantly more demanding than tracking a single reference, since both the number of variables and constraints scale linearly with the orbit horizon \(T\), in addition to the prediction horizon \(N\). Nonetheless, several strategies can substantially reduce the computational burden. The periodic structure of the problem enables efficient warm-starting by shifting the previous solution forward in time when external parameters remain unchanged, thereby accelerating convergence. Moreover, following the approach in~\cite{badalamenti2025efficientcctmpc}, the Optimal PFIT can be precomputed offline, and the artificial tube constructed online via homothetic transformations. This eliminates the need to solve the full DRTO problem at each step, though it may reduce the Region of Attraction. A detailed investigation of this trade-off and its impact on closed-loop performance is left for future work.

\section{Conclusions} \label{sec:Sect6}
This paper extends the CCTMPC framework to enable robust periodic economic optimization for systems that can be described by polytopic LDIs.
This method guarantees recursive feasibility and convergence by leveraging configuration-constrained tubes and periodic feasibility conditions. To address computational complexity, we also proposed a QP-based approximation via first-order Taylor expansion of the economic cost, preserving robustness and stability while improving real-time feasibility. Numerical results on low- and high-dimensional systems confirm the effectiveness of the approach, with the QP-based formulation also reducing solve times by more than an order of magnitude compared to the original convex formulation. Strategies such as warm-starting and homothetic tube construction were also discussed as promising directions for further reducing complexity.

\balance
\bibliographystyle{plain}
\bibliography{references}

\end{document}